\newtheorem{theorem}{Theorem}
\newtheorem{lemma}{Lemma}
\newtheorem{corollary}{Corollary}
\newtheorem{remark}{Remark}
\begin{document}

\title[Spectral Theorem approach to Quantum Observables II]{Spectral Theorem approach
to the  Characteristic Function of Quantum Observables  II}
\author{Andreas Boukas}
\address{Centro Vito Volterra, Universit\`{a} di Roma Tor Vergata, via Columbia  2, 00133 Roma,
Italy and Graduate School of Mathematics, Hellenic Open
University, Greece}
\email{boukas.andreas@ac.eap.gr}

\date{\today}

\subjclass[2010]{47B25, 47B15, 47B40, 47B47, 47A10, 81Q10, 80M22}

\keywords{}

\begin{abstract}
We compute the resolvent of the \textit{anti-commutator operator}
$XP+PX$ and of the \textit{quantum harmonic oscillator Hamiltonian
operator} $\frac{1}{2}(X^2+P^2)$. Using Stone's formula for
finding the spectral resolution of an, either bounded or
unbounded, self-adjoint operator on a Hilbert space,  we also
compute their Vacuum Characteristic Function (Quantum Fourier
Transform). We also show how Stone's formula is applied to the
computation of the Vacuum Characteristic Function of finite
dimensional quantum observables. The method is proposed as an
analytical alternative to the algebraic (or Heisenberg) approach
relying on the associated Lie algebra commutation relations.
\end{abstract}

\maketitle

\section{Introduction}\label{intro}

On page $373$ of \cite{taylor} it is stated that "the actual
determination of the resolution of the identity for a given
operator is not an easy matter, in general". Some Functional
Analysis texts that cover spectral integration, for example
\cite{Yosida}, give, at most, the spectral resolution of the
multiplication and differentiation operators on the Hilbert space
of square integrable functions. Few Functional Analysis texts,
such as \cite{DS}, give Stone's formula for computing the spectral
resolution of a self-adjoint Hilbert space operator and even
fewer, for example \cite{Ro}, give examples of how to use it.

In this paper, which is a sequel to \cite{BFspec}, we illustrate
in detail the use of Stone's formula for the spectral resolution,
by applying it to the  \textit{anti-commutator operator} $XP+PX$
and the \textit{quantum harmonic oscillator Hamiltonian operator}
$\frac{1}{2}(X^2+P^2)$ of Quantum Mechanics. We see that it
naturally leads to the appearance of several types of differential
equations and special functions. After computing the spectral
resolution, we compute the Vacuum Characteristic Function (Quantum
Fourier Transform) of these operators and show that the results
agree with those obtained, using Lie algebraic techniques, in
\cite{AccBouCOSA}.

 As in \cite{BFspec}, we consider the (self-adjoint) \textit{position, momentum} and
 \textit{identity} operators, respectively, defined in $L^2(\mathbb{R},\mathbb{C})$ with inner
product
\[
\langle f, g \rangle
=\frac{1}{\sqrt{\hslash}}\,\int_{\mathbb{R}}\,\overline{f(x)}g(x)\,dx
\ ,
\]
by
\[
X\,f(x)=x\,f(x)\,\,;\,\,
P\,f(x)=-i\,\hslash\,f^{\prime}(x)\,\,;\,\, \mathbf{1}\,f(x)=I
f(x)=f(x)\    ,
\]
and satisfying the commutation relations
\[
\lbrack P, X \rbrack=-i\hslash\mathbf{1}  \ ,
\]
on
\[
\Omega={\rm dom }(X)\cap {\rm dom }(P) \  ,
\]
 where,
\[
{\rm dom }(X)=\{f\in L^2(\mathbb{R},\mathbb{C}))
\,:\,\int_{\mathbb{R}}x^2\,|f(x)|^2\,dx<+\infty\} \ ,
\]
and
\[
{\rm dom }(P)=\{f\in L^2(\mathbb{R},\mathbb{C}) \,:\,\mbox{$f$ is
absolutely continuous and
}\int_{\mathbb{R}}\,\left|f^\prime(x)\right|^2\,dx< +\infty\} \ ,
\]
are respectively the, dense in $L^2(\mathbb{R},\mathbb{C})$,
domains of $X$ and $P$ . Since it contains
$C_0^{\infty}(\mathbb{R})$, $\Omega$ is nonempty and dense in
$L^2(\mathbb{R},\mathbb{C})$. The \textit{Schwartz class}
$\mathcal{S}$ is a common invariant domain \cite{FG} of $X$ and
$P$ which is also dense in $L^2(\mathbb{R},\mathbb{C})$ and
contains $C_0^{\infty}(\mathbb{R})$, and  is therefore suitable
for defining $XP$ and $PX$. In this paper we normalize to
$\hslash=1$. Functions in the domain of $P$ are continuous and
vanish at infinity. As pointed out in \cite{BFspec},
\[
\Phi=\Phi(x)=
\pi^{-1/4}\,e^{-\frac{x^2}{2\hslash}}=\pi^{-1/4}\,e^{-\frac{x^2}{2}}\
,
\]
is a unit vector in $\Omega$. For $a\in\mathbb{C}$ we denote
$R(a;T)=(a-T)^{-1}$ the \textit{resolvent} of $T$. The
\textit{spectral resolution} $\{E_\lambda \,|\,\lambda \in
\mathbb{R} \}$ of a bounded or unbounded self-adjoint  operator
$T$ in a complex separable Hilbert Space $\mathcal{H}$ can be
computed using Stone's formula (see \cite{DS}, Theorems X.6.1 and
XII.2.10):
\[
E\left((a,b)\right)=\lim_{\delta\to 0^+}\lim_{\epsilon\to
0^+}\frac{1}{2\pi i}\int_{a+\delta}^{b-\delta}\left(R(t-\epsilon
i; T)-R(t+\epsilon i; T)\right)\,dt \  ,
\]
where $(a,b)$ is the open interval $a<\lambda<b$,  and the limit
is in the strong operator topology. For $a\to-\infty$ and
$b=\lambda$ we have
\begin{align*}
E_\lambda&=E\left( (-\infty, \lambda ]\right)=\lim_{\rho\to
0^+}E\left( (-\infty,
\lambda+\rho)\right)\\
&=\lim_{\rho\to 0^+}\lim_{\delta\to 0^+}\lim_{\epsilon\to
0^+}\frac{1}{2\pi
i}\int_{-\infty}^{\lambda+\rho-\delta}\left(R(t-\epsilon i;
T)-R(t+\epsilon i; T)\right)\,dt \   .
\end{align*}
For $f,g\in\mathcal{H}$, see \cite{Ro},
\begin{equation}\label{sr}
\langle f, E_\lambda g\rangle=\lim_{\epsilon\to 0^+}\frac{1}{2\pi
i}\int_{-\infty}^{\lambda}\langle f, \left(R(t-\epsilon i;
T)-R(t+\epsilon i; T)\right)g\rangle\,dt \   .
\end{equation}
Once the \textit{vacuum resolution of the identity}
\[
\langle \Phi, E_\lambda \Phi\rangle=\lim_{\epsilon\to
0^+}\frac{1}{2\pi i}\int_{-\infty}^{\lambda}\langle \Phi,
\left(R(t-\epsilon i; T)-R(t+\epsilon i; T)\right)\Phi\rangle\,dt
\ ,
\]
corresponding to the operator $T$ is known, equivalently, once its
\textit{vacuum differential}
\[
d\langle \Phi, E_\lambda \Phi\rangle \ ,
\]
is known, we can immediately compute its \textit{vacuum
characteristic function}
\[
\langle \Phi, e^{itT}\Phi
\rangle=\int_{\mathbb{R}}e^{it\lambda}\,d\langle\Phi,
E_{\lambda}\Phi \rangle\ .
\]
For a complex number $z$ we denote ${\rm Re }z$, ${\rm Im }z$ its
real and imaginary part, respectively.

The following Lemma is very useful in computing the vacuum
spectral resolution.

\begin{lemma}\label{fgs}
\[
\langle \Phi, E_\lambda \Phi\rangle=  \lim_{\epsilon\to
0^+}\frac{1}{\pi }\int_{-\infty}^{\lambda}{\rm Im }\langle
\Phi,R(t-\epsilon i; T)\Phi\rangle\,dt  \  .
\]
\end{lemma}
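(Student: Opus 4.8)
The plan is to start directly from the representation (\ref{sr}) specialized to $f=g=\Phi$, namely
\[
\langle \Phi, E_\lambda \Phi\rangle=\lim_{\epsilon\to 0^+}\frac{1}{2\pi i}\int_{-\infty}^{\lambda}\langle \Phi, \left(R(t-\epsilon i; T)-R(t+\epsilon i; T)\right)\Phi\rangle\,dt,
\]
and to simplify the integrand by exploiting the self-adjointness of $T$. The key algebraic fact I would use is that for a self-adjoint $T$ the resolvent satisfies $R(a;T)^{*}=R(\bar a;T)$, which follows from $\big((a-T)^{-1}\big)^{*}=\big((a-T)^{*}\big)^{-1}=(\bar a-T)^{-1}$. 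Since $t\pm\epsilon i$ are complex conjugates for real $t$ and $\epsilon$, this gives $R(t+\epsilon i;T)=R(t-\epsilon i;T)^{*}$.

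Next I would compute the matrix element of the conjugated resolvent. Writing $w=\langle \Phi, R(t-\epsilon i;T)\Phi\rangle$ and using the definition of the adjoint together with the conjugate symmetry of the inner product, I would obtain
\[
\langle \Phi, R(t+\epsilon i;T)\Phi\rangle=\langle \Phi, R(t-\epsilon i;T)^{*}\Phi\rangle=\langle R(t-\epsilon i;T)\Phi, \Phi\rangle=\overline{w}.
\]
Here one must be attentive to the convention of this paper, in which $\langle\cdot,\cdot\rangle$ is conjugate-linear in the first argument, so that $\langle A\Phi,\Phi\rangle=\overline{\langle\Phi,A\Phi\rangle}$; this is the only point where a sign or conjugation error could creep in.

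With this identification the integrand collapses: $w-\overline{w}=2i\,{\rm Im}\,w$, and dividing by $2\pi i$ yields $\frac{1}{2\pi i}(w-\overline{w})=\frac{1}{\pi}\,{\rm Im}\,w$. Substituting back into the displayed expression for $\langle\Phi,E_\lambda\Phi\rangle$ produces exactly
\[
\langle \Phi, E_\lambda \Phi\rangle=\lim_{\epsilon\to 0^+}\frac{1}{\pi}\int_{-\infty}^{\lambda}{\rm Im}\,\langle \Phi, R(t-\epsilon i; T)\Phi\rangle\,dt,
\]
as claimed. No new limiting arguments are required, since the limit in $\epsilon$ and the convergence of the improper integral are inherited verbatim from (\ref{sr}).

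I do not expect a serious obstacle here: the statement is essentially a bookkeeping consequence of the resolvent adjoint identity. The one step deserving care is the adjoint relation $R(t+\epsilon i;T)=R(t-\epsilon i;T)^{*}$, which relies on $T$ being genuinely self-adjoint (not merely symmetric) so that its spectrum is real and both points $t\pm\epsilon i$ lie in the resolvent set for $\epsilon>0$; this is guaranteed for the operators $XP+PX$ and $\tfrac12(X^2+P^2)$ treated later. Everything else is a routine manipulation of the inner product conventions fixed in the introduction.
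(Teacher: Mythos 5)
Your proposal is correct and follows essentially the same route as the paper: both start from the representation \eqref{sr} with $f=g=\Phi$, use the adjoint relation $R(t+\epsilon i;T)=R(t-\epsilon i;T)^{*}$ for self-adjoint $T$ to rewrite the second matrix element as the complex conjugate of the first, and collapse the difference to $2i\,{\rm Im}\langle\Phi,R(t-\epsilon i;T)\Phi\rangle$. No substantive difference.
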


\begin{proof}Using the resolvent identity, see \cite{DS},
\[
R(\lambda; T)^*=R(\bar\lambda; T^*)\  ,
\]
 we have
\begin{align*}
\langle \Phi, E_\lambda \Phi\rangle=&\lim_{\epsilon\to
0^+}\frac{1}{2\pi i}\int_{-\infty}^{\lambda}\langle \Phi,
\left(R(t-\epsilon i; T)-R(t+\epsilon i; T)\right)\Phi\rangle\,dt\\
=&\lim_{\epsilon\to 0^+}\frac{1}{2\pi
i}\int_{-\infty}^{\lambda}\left( \langle  \Phi, R(t-\epsilon i;
T)\Phi\rangle-\langle R(t-\epsilon i;
T)\Phi,\Phi\rangle\right)\,dt\\
=&\lim_{\epsilon\to 0^+}\frac{1}{2\pi
i}\int_{-\infty}^{\lambda}\left( \langle  \Phi,
R(t-\epsilon i; T)\Phi\rangle-\overline{\langle \Phi,R(t-\epsilon i;
T)\Phi\rangle}\right)\,dt\\
=&\lim_{\epsilon\to 0^+}\frac{1}{2\pi i}\int_{-\infty}^{\lambda}2
i \,{\rm Im } \langle \Phi,R(t-\epsilon i;
T)\Phi\rangle\,dt \\
=&\lim_{\epsilon\to 0^+}\frac{1}{\pi }\int_{-\infty}^{\lambda}
{\rm Im } \langle \Phi,R(t-\epsilon i; T)\Phi\rangle\,dt   \  .
\end{align*}
\end{proof}

As in \cite{BFspec}, we define the Fourier transform of $f$ by
\[
\hat{f}(t)=\left(Uf\right)(t)=(2
\pi)^{-1/2}\,\int_{-\infty}^\infty \,e^{i\lambda
t}f(\lambda)d\lambda\ ,
\]
and the inverse Fourier transform of $\hat{f}$ by
\[
f(\lambda)=\left(U^{-1}\hat{f}\right)(\lambda)=(2
\pi)^{-1/2}\,\int_{-\infty}^\infty \,e^{-i\lambda t}\hat{f}(t)dt\
.
\]
Finally, for $a\in\mathbb{R}$ we denote by $\delta_a$ and $H_a$ ,
respectively, the \textit{Dirac delta} and \textit{Heaviside unit
step} functions defined, for a \textit{test function $f$}, by
\[
\int_{\mathbb{R}} f(x)\delta_a (x)\,dx =\int_{\mathbb{R}}
f(x)\delta (x-a)\,dx=f(a) \ ,
\]
 and
\[
H_a(x)=H(x-a)=\left\{
\begin{array}{llr}
 1\   ,&  x \geq a  \\
& \\
0\   ,& x < a
 \end{array}
 \right.  \  .
\]
In the sense of distributions, $\delta_a$  is the derivative of
$H_a$.

\section{The anti-commutator operator $XP+PX$}

\begin{theorem}\label{sa1} The operator $XP+PX$ is symmetric on the Schwartz class $\mathcal{S}$
 and admits a self-adjoint extension.
\end{theorem}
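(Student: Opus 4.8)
The plan is to separate the statement into its two assertions and treat them in order: first symmetry on $\mathcal{S}$, then the existence of a self-adjoint extension via the deficiency-index criterion.

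For symmetry, I would exploit the fact that $X$ and $P$ are each symmetric and that $\mathcal{S}$ is a common invariant domain for both. Since $x$ is real, $\langle Xf,g\rangle=\langle f,Xg\rangle$ on $\mathcal{S}$, and an integration by parts (with vanishing boundary terms, as Schwartz functions decay rapidly) gives $\langle Pf,g\rangle=\langle f,Pg\rangle$. Because $Pf,Xf\in\mathcal{S}$ whenever $f\in\mathcal{S}$, I can write $\langle XPf,g\rangle=\langle Pf,Xg\rangle=\langle f,PXg\rangle$ and $\langle PXf,g\rangle=\langle Xf,Pg\rangle=\langle f,XPg\rangle$; adding these two identities yields $\langle (XP+PX)f,g\rangle=\langle f,(XP+PX)g\rangle$ for all $f,g\in\mathcal{S}$, which is the required symmetry.

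For the self-adjoint extension, I would compute the deficiency spaces. A direct computation shows that on $\mathcal{S}$ the operator acts as the first-order differential expression $XP+PX=-i\left(2x\frac{d}{dx}+1\right)$, and its adjoint acts by the same expression on the maximal domain $\{g\in L^2:-i(2xg'+g)\in L^2\}$. The deficiency indices $n_\pm$ are the dimensions of the $L^2$ solution spaces of $(XP+PX)^*g=\pm i g$. Writing these out, the equation for $n_+$ reduces to $(xg)'=0$ and the equation for $n_-$ reduces to $xg'=0$, both in the distributional sense. I expect to show that the only $L^2$ solutions are trivial: the first forces $xg$ to be a constant, whose only $L^2$ realization is $g=0$ (since $1/x\notin L^2$), and the second forces $g$ to be piecewise constant across the origin, again excluded from $L^2$. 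Hence $n_+=n_-=0$, the operator is in fact essentially self-adjoint, and its closure is the desired self-adjoint extension; in particular, von Neumann's theorem on equal deficiency indices guarantees that a self-adjoint extension exists.

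The main obstacle will be the careful distributional analysis at the origin $x=0$. Because complex conjugation \emph{anticommutes} rather than commutes with $-i\left(2x\frac{d}{dx}+1\right)$, the quickest ``real operator'' form of von Neumann's criterion does not apply directly, so one cannot avoid examining the deficiency equations themselves; and there the singular point $x=0$ is exactly where spurious distributional solutions (such as a $\delta_0$ contribution to $g'$, or the non-integrable $1/x$) could a priori live. Confirming that none of these lie in $L^2(\mathbb{R})$ is the crux. As an alternative route that sidesteps the pointwise analysis, I could invoke Nelson's analytic vector theorem: rewriting $XP+PX=i\big((a^\dagger)^2-a^2\big)$ in terms of the creation and annihilation operators exhibits the Hermite functions as a dense set of analytic vectors, which already forces essential self-adjointness.
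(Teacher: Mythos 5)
Your symmetry argument coincides with the paper's: both pass $\langle XPf,g\rangle$ and $\langle PXf,g\rangle$ through the symmetry of $X$ and $P$ on the invariant domain $\mathcal{S}$. For the second assertion you take a genuinely different route. The paper introduces the conjugation $Kf(x)=\overline{f(-x)}$, claims $TK=KT$ on $\mathcal{S}$, and invokes the Weidmann/von~Neumann criterion for $K$-real operators, which yields only the \emph{existence} of a self-adjoint extension. You compute the deficiency indices instead, and your computation is correct: with $Tg=-i(2xg'+g)$, the equation $T^{*}g=ig$ reduces to $(xg)'=0$, whose only $L^{2}$ solution is $g=0$ since $1/x\notin L^{2}$ near the origin, while $T^{*}g=-ig$ reduces to $xg'=0$, which forces $g'$ to be a multiple of $\delta_0$ and $g$ to be a step function, again excluded from $L^{2}$. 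Hence $n_{+}=n_{-}=0$ and $T$ is essentially self-adjoint on $\mathcal{S}$ --- a strictly stronger conclusion (uniqueness, not just existence, of the self-adjoint extension), consistent with the fact that $\tfrac{1}{2}(XP+PX)$ generates the unitary dilation group; the Nelson analytic-vector alternative via $i\bigl((a^{\dagger})^{2}-a^{2}\bigr)$ and the Hermite basis gives the same strengthening. The only detail your plan still owes is the identification of $\mathrm{dom}(T^{*})$ with the maximal domain, so that the deficiency equations may be read distributionally across $x=0$; this is routine for a first-order expression and you have correctly located it as the remaining work.

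One further point in your favour: the anticommutation obstruction you flag for plain complex conjugation in fact persists for the paper's modified conjugation. Since $2x\frac{d}{dx}+1$ has real coefficients and commutes with the parity map $f(x)\mapsto f(-x)$, one finds $KTf(x)=\overline{(Tf)(-x)}=i\,\overline{f(-x)}-2ix\,\overline{f'(-x)}=-TKf(x)$, so this $K$ \emph{anti}commutes with $T$; the paper's verification of $[T,K]=0$ drops a sign when conjugating the factor $-i$. The theorem itself is of course still true, but your decision to bypass the conjugation criterion and argue through the deficiency equations (or Nelson's theorem) is not merely stylistic --- as written, it is the argument that actually closes the proof.
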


\begin{proof}  Let $T=XP+PX$ and let $f,g
\in \mathcal{S}$. Since $Xf, Pf \in \mathcal{S}$, $T$ is well
defined on ${\rm dom}(T)=\mathcal{S}$. Since $X, P$ are
self-adjoint,
\begin{align*}
\langle Tf, g\rangle=& \langle XPf, g \rangle+ \langle PXf, g
\rangle =\langle Pf, Xg \rangle+ \langle Xf, Pg \rangle\\
=&\langle f, PXg \rangle+ \langle f, XPg \rangle =\langle f,
(PX+XP)g \rangle =\langle f, Tg \rangle\ ,
\end{align*}
so $T$ is symmetric. To show that $T$ admits a self-adjoint
extension, we consider the \textit{ conjugation operator}
\[
Kf(x)=\overline{f(-x)} \  .
\]
If $f$ is in the domain of $T$ then so also is $Kf$ and, since for
a function $g$ in the domain of $T$,
\[
Tg(x)=-i g(x)-2 i x \frac{d}{dx}g(x) \  ,
\]
we find
\begin{align*}
\lbrack T, K \rbrack f (x)&=TKf(x)-KTf(x)\\
&=\left(-i \overline{f(-x)}+2 i x \overline{f'(-x)}
\right)-\left(-i \overline{f(-x)}+2 i x \overline{f'(-x)}
\right)=0\  ,
\end{align*}
 i.e., in the context of Section 8 of \cite{weidmann},
$T$ is $K$-real. Thus, by Theorem 8.9 of \cite{weidmann}, the
symmetric operator $T$ has a self-adjoint extension.
\end{proof}

\begin{theorem}\label{R1}For ${\rm Im} a>-1$ and $s\neq 0$, the resolvent of
 $XP+PX$ is
\[
R(a;XP+PX)g(s)=\left\{
\begin{array}{llr}
 \frac{i}{2}s^{ -\frac{a+i}{2i} }\int_s^\infty w^{ \frac{a-i}{2i}
}g(w)\,dw  \   ,&  s> 0   \\
& \\
\frac{i}{2}(-s)^{ -\frac{a+i}{2i} }\int_{-\infty}^s (-w)^{
\frac{a-i}{2i} }g(w)\,dw\ , & s<0
\end{array}
\right. \  ,
\]
where $g\in \mathcal{S}$. For $s=0$,
\[
R(a;XP+PX)g(0)=\frac{g(0)}{a+i}\  .
\]
\end{theorem}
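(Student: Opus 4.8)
The plan is to read the identity $R(a;XP+PX)g=f$ as the statement that $f$ solves $(a-T)f=g$, where $T=XP+PX$. Using the expression $Tf(s)=-if(s)-2is\,f'(s)$ established in the proof of Theorem \ref{sa1}, this equation becomes the first order linear ODE $2is\,f'(s)+(a+i)f(s)=g(s)$. Because the coefficient $2is$ of $f'$ vanishes at the origin, the equation is singular there, so I would solve it separately on $(0,\infty)$, on $(-\infty,0)$, and at the isolated point $s=0$.

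On $(0,\infty)$ I would divide by $2is$ and multiply by the integrating factor $\mu(s)=s^{(a+i)/(2i)}$, which turns the equation into $\frac{d}{ds}\bigl(\mu(s)f(s)\bigr)=\frac{1}{2i}\,s^{(a-i)/(2i)}g(s)$, the exponent simplifying via $\frac{a+i}{2i}-1=\frac{a-i}{2i}$. Integrating and then using $-\frac{1}{2i}=\frac{i}{2}$ produces the stated kernel once the constant of integration is fixed. The key modelling choice is the range of integration: to obtain the solution that is square integrable rather than one growing like the homogeneous solution $s^{-(a+i)/(2i)}$, I would integrate from $s$ to $+\infty$, so that the rapid decay of the Schwartz function $g$ kills the boundary term at $+\infty$ and forces $\mu f\to 0$ there. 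The computation on $(-\infty,0)$ is identical after replacing $s$ by $-s$, with integrating factor $(-s)^{(a+i)/(2i)}$ and integration from $-\infty$ to $s$.

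Two points need care. First, the hypothesis ${\rm Im}\,a>-1$ is exactly what guarantees that the resulting integrals converge as the limit of integration approaches the origin: near $w=0$ the integrand is of order $w^{({\rm Im}\,a-1)/2}$, which is integrable precisely when $\frac{{\rm Im}\,a-1}{2}>-1$. Second, at $s=0$ the $f'$ term drops out and the ODE collapses to $(a+i)f(0)=g(0)$, giving $f(0)=g(0)/(a+i)$; this is legitimate since ${\rm Im}\,a>-1$ excludes $a=-i$. I expect the main obstacle to be the justification of the boundary selection, i.e. arguing that the integrate-to-$\pm\infty$ solution is the genuine resolvent (the homogeneous solutions are not in $L^2$, and $a$ lies in the resolvent set of the self-adjoint extension whenever ${\rm Im}\,a\neq0$). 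The cleanest way to close this is to verify the formula by direct substitution: differentiating the candidate $f$, the two terms proportional to $s^{-(a+i)/(2i)}\int_s^\infty(\cdots)$ cancel and the Leibniz boundary contribution reproduces exactly $g(s)$, confirming $(a-T)f=g$, after which self-adjointness yields uniqueness and hence $f=R(a;XP+PX)g$.
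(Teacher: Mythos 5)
Your proposal is correct and follows essentially the same route as the paper: reduce $(a-T)G=g$ to the singular first-order ODE $2isG'(s)+(a+i)G(s)=g(s)$, treat $s=0$ separately, and on each half-line apply the integrating factor $|s|^{\frac{a+i}{2i}}$ and integrate toward $\pm\infty$ so the boundary term vanishes. Your additional remarks on why ${\rm Im}\,a>-1$ is needed and on verifying the formula by direct substitution only sharpen the paper's argument (which disposes of the boundary term by assuming $G$ is Schwartz) rather than changing it.
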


\begin{proof} For $a\in\mathbb{C}$ with ${\rm Im} a>-1$ and for $s\in\mathbb{R}$,
\begin{align*}
R(a;XP+PX)g(s)=G(s)& \iff g(s)=(a-XP-PX)G(s)\notag\\
& \iff g(s)=a G(s)+i s G^\prime (s)+i (G(s))+s G^\prime (s))    \notag\\
& \iff 2 i s G^\prime(s)+(a+i) G(s)= g(s) \ .
\end{align*}
 For $s=0$ we find
\[
G(0)=\frac{g(0)}{a+i}\  .
\]
For $s\neq 0$, we have
\[
 G^\prime(s)+\frac{a+i}{ 2 i s} G(s)= \frac{g(s)}{ 2 i s} \ ,
\]
which is a first-order linear ordinary differential equation with
complex coefficients. Multiplying  by the integrating factor
$|s|^{ \frac{a+i}{2i} }$, replacing $s$ by $t$ and integrating
from $s$ to $\infty$  (if $s>0$) and from  $-\infty$
 to $s$ (if $s<0$), using the fact
that, since $G$ is in the Schwartz class,
\[
\lim_{t\to \pm \infty}|t^{ \frac{a+i}{2i} } G(t)|=\lim_{t\to \pm
\infty}|t^{ \frac{{\rm Im}a+1}{2} } G(t)| =0\   ,
\]
 we find
\[
G(s)=R(a; XP+PX)g(s)=\left\{
\begin{array}{llr}
 \frac{i}{2}s^{ -\frac{a+i}{2i} }\int_s^\infty w^{ \frac{a-i}{2i}
}g(w)\,dw  \   ,&  s> 0   \\
& \\
\frac{i}{2}(-s)^{ -\frac{a+i}{2i} }\int_{-\infty}^s (-w)^{
\frac{a-i}{2i} }g(w)\,dw\ , & s<0 \   .
\end{array}
\right.
\]
\end{proof}

\begin{corollary}\label{even} For ${\rm Im} a>-1$, if $g$ is even then so also is $R(a; XP+PX)g$.
\end{corollary}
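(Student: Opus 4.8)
The plan is to read off the even symmetry directly from the closed form of the resolvent obtained in Theorem \ref{R1}, without re-solving the defining differential equation. Write $G = R(a;XP+PX)g$. Because the two branches of the formula for $G$ are precisely the $s>0$ and $s<0$ cases, and these are exchanged by the reflection $s \mapsto -s$, it suffices to fix $s>0$ and verify $G(-s) = G(s)$; the case $s<0$ then follows by relabeling, and the value at $s=0$ is manifestly symmetric.

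First I would fix $s>0$, so that $-s<0$, and evaluate $G(-s)$ using the lower branch of the formula from Theorem \ref{R1}. The prefactor becomes $(-(-s))^{-\frac{a+i}{2i}} = s^{-\frac{a+i}{2i}}$, and the integral is $\int_{-\infty}^{-s}(-w)^{\frac{a-i}{2i}}g(w)\,dw$. Next I would apply the substitution $w = -u$: the limits $-\infty$ and $-s$ become $+\infty$ and $s$, the factor $(-w)^{\frac{a-i}{2i}}$ becomes $u^{\frac{a-i}{2i}}$, and $g(w)$ becomes $g(-u)$. After accounting for the sign of $dw$ and reversing the orientation of the interval, the integral turns into $\int_s^\infty u^{\frac{a-i}{2i}}g(-u)\,du$. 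Invoking the hypothesis that $g$ is even, $g(-u)=g(u)$, this matches exactly the integral appearing in the upper branch, so $G(-s) = \frac{i}{2}s^{-\frac{a+i}{2i}}\int_s^\infty u^{\frac{a-i}{2i}}g(u)\,du = G(s)$.

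The step I expect to require the most care is the handling of the complex power functions, rather than any genuine analytic difficulty. The key observation is that the resolvent formula is written so that every base to which a complex exponent is attached, namely $s$, $-s$, $-w$, and after substitution $u$, is a strictly positive real number, so the principal branch is unambiguous and the substitution crosses no branch cut. Consequently the computation is essentially bookkeeping: tracking the prefactor collapse $(-(-s))^{\cdots}=s^{\cdots}$ and the orientation reversal of the integration interval. The condition ${\rm Im}\,a>-1$, already imposed in Theorem \ref{R1} to guarantee convergence of the defining integrals, plays no additional role here beyond ensuring that both sides make sense, so no further obstacle arises.
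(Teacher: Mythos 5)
Your proposal is correct and follows essentially the same route as the paper: evaluate $R(a;XP+PX)g(-s)$ for $s>0$ from the explicit lower branch of Theorem \ref{R1}, substitute $u=-w$, and invoke the evenness of $g$ to recover the upper branch at $s$. The only cosmetic difference is that the paper also carries out the $s<0$ case explicitly, whereas you correctly note it follows from the $s>0$ case by relabeling.
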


\begin{proof} Let $s>0$. By Theorem \ref{R1},
\begin{align*}
R(a; XP+PX)g(-s)&=\frac{i}{2} s^{ -\frac{a+i}{2i}
}\int_{-\infty}^{-s} (-w)^{ \frac{a-i}{2i} }g(w)\,dw\\
&=\frac{i}{2}s^{ -\frac{a+i}{2i} }\int_{-\infty}^{-s} (-w)^{
\frac{a-i}{2i} }g(-w)\,(-1)d(-w) \  ,
\end{align*}
which, letting $u=-w$ in the integral, yields
\[
R(a; XP+PX)g(-s)=\frac{i}{2}s^{ -\frac{a+i}{2i} }\int_s^\infty u^{
\frac{a-i}{2i} }g(u)\,du=R(a; XP+PX)g(s) \ .
\]
For $s<0$,
\begin{align*}
R(a; XP+PX)g(-s)&=\frac{i}{2} (-s)^{ -\frac{a+i}{2i}
}\int_{-s}^{\infty} w^{ \frac{a-i}{2i} }g(w)\,dw\ ,
\end{align*}
which, letting $u=-w$, yields
\begin{align*}
R(a; XP+PX)g(-s)&= \frac{i}{2}(-s)^{ -\frac{a+i}{2i}
}\int_{s}^{-\infty} (-u)^{ \frac{a-i}{2i} }g(-u)\,(-1)du\\
&=\frac{i}{2}(-s)^{ -\frac{a+i}{2i}
}\int_{-\infty}^{s} (-u)^{ \frac{a-i}{2i} }g(u)\,du\\
&=R(a; XP+PX)g(s) \ .
\end{align*}
\end{proof}

\begin{theorem}\label{XP+PX} The vacuum spectral resolution of
 $XP+PX$ is
\[
\langle \Phi, E_\lambda
\Phi\rangle=\frac{1-i}{8\pi}\,\int_{-\infty}^\lambda
\left(e^{-\frac{\pi t}{4}}\,B\left(-1; \frac{1- i
t}{4},\frac{1}{2} \right)+e^{\frac{\pi t}{4}}\,B\left(-1; \frac{1+
i t}{4},\frac{1}{2} \right)\right)\,dt  \ ,
\]
where
\[
B(z; a, b)=\int_0^z t^{a-1} (1-t)^{b-1}\,dt=z^a \sum_{n=0}^\infty
\frac{(1-b)_n }{n!\,(a+n)}z^n\ ,
\]
is the \textit{incomplete Beta function} and, for
$x\in\mathbb{R}$,  $(x)_n=x(x+1)(x+2)\cdots(x+n-1)$. Moreover, for
$t\in \mathbb{R}$,
\[
\langle \Phi, e^{i t (XP+PX)} \Phi \rangle=\left({\rm sech} \,2t
\right)^{1/2}
\]
\end{theorem}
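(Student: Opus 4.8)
The plan is to substitute the resolvent of Theorem \ref{R1} into Lemma \ref{fgs}. Write $T=XP+PX$. Since $\Phi(x)=\pi^{-1/4}e^{-x^2/2}$ is real and even, Corollary \ref{even} makes $R(a;T)\Phi$ even, so with $\hslash=1$ one has $\langle\Phi,R(a;T)\Phi\rangle=2\int_0^\infty\Phi(s)\,[R(a;T)\Phi](s)\,ds$. Inserting the $s>0$ branch of Theorem \ref{R1} and writing $\mu=\frac{a-i}{2i}$ (so that the exponent of $s$ is $-\mu-1$) gives
\[
\langle\Phi,R(a;T)\Phi\rangle=\frac{i}{\sqrt\pi}\int_0^\infty e^{-s^2/2}s^{-\mu-1}\int_s^\infty w^{\mu}e^{-w^2/2}\,dw\,ds .
\]
The first step I would carry out is the substitution $w=sy$ in the inner integral, which cancels $s^{-\mu-1}$ exactly against the resulting $s^{\mu+1}$ and, after interchanging the order of integration, leaves $\int_0^\infty e^{-s^2(1+y^2)/2}\,ds=\tfrac12\sqrt{2\pi}\,(1+y^2)^{-1/2}$. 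This collapses the double integral to the single integral
\[
\langle\Phi,R(a;T)\Phi\rangle=\frac{i}{\sqrt2}\int_1^\infty y^{\mu}(1+y^2)^{-1/2}\,dy ,
\]
which converges precisely when ${\rm Im}\,a>-1$ and in which the boundary value $\epsilon\to0^+$ (i.e. $a=t$) may be taken inside the integral.

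For the spectral resolution I would then turn this into the incomplete Beta function. The substitution $\omega=y^{-2}$ converts $\int_1^\infty y^{\mu}(1+y^2)^{-1/2}\,dy$ into $\tfrac12\int_0^1\omega^{-\mu/2-1}(1+\omega)^{-1/2}\,d\omega$, and expanding $(1+\omega)^{-1/2}$ and integrating term by term identifies this with $(-1)^{\mu/2}B(-1;-\mu/2,\tfrac12)$ through the series definition of $B(-1;\cdot,\cdot)$ in the statement. At the boundary value $a=t$ one has $\mu=-\tfrac12-\tfrac{it}2$ and $-\mu/2=\tfrac{1+it}4$, so the prefactor $(-1)^{\mu/2}=e^{i\pi\mu/2}$ contributes the weight $e^{\pi t/4}$ together with a constant phase $e^{-i\pi/4}$; combined with $\frac{i}{\sqrt2}$ this yields the compact form $\langle\Phi,R(t-i0;T)\Phi\rangle=\frac{1+i}4\,e^{\pi t/4}B(-1;\tfrac{1+it}4,\tfrac12)$. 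Finally Lemma \ref{fgs} forms $\frac1{2\pi i}(\langle\Phi,R\Phi\rangle-\overline{\langle\Phi,R\Phi\rangle})$: the conjugate brings in $e^{-\pi t/4}$ and the argument $\tfrac{1-it}4$, the two pieces assemble into exactly the symmetric bracket of the statement, and the overall constant is $\frac{1+i}{8\pi i}=\frac{1-i}{8\pi}$. Integrating in $t$ from $-\infty$ to $\lambda$ then produces $\langle\Phi,E_\lambda\Phi\rangle$.

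The delicate point, which I expect to be the main obstacle, is the branch bookkeeping. Passing from $\tfrac12\int_0^1\omega^{-\mu/2-1}(1+\omega)^{-1/2}\,d\omega$ to $(-1)^{\mu/2}B(-1;-\mu/2,\tfrac12)$ fixes the branch $(-1)^{\mu/2}=e^{i\pi\mu/2}$, and the conjugation in Lemma \ref{fgs} must be carried through consistently. Concretely one needs the relation $\overline{B(-1;z,\tfrac12)}=e^{-2\pi i\bar z}B(-1;\bar z,\tfrac12)$, which follows from the series definition; it supplies the phase $e^{-i\pi/2}e^{-\pi t/2}$ that turns the $e^{\pi t/4}$ of the conjugate term into $e^{-\pi t/4}$ and is exactly what makes the density real. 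Mishandling this phase flips $\frac{1-i}{8\pi}$ to $\frac{1+i}{8\pi}$ and renders the expression non-real, so this is where the care is required.

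For the characteristic function I would deliberately avoid re-expanding the Beta functions, and instead keep the density in its Fourier--cosine form read off from the single integral above, namely $d\langle\Phi,E_t\Phi\rangle=\frac{1}{\pi\sqrt2}\,(\int_1^\infty y^{-1/2}(1+y^2)^{-1/2}\cos(\tfrac t2\ln y)\,dy)\,dt$. Substituting into $\langle\Phi,e^{itT}\Phi\rangle=\int_{\mathbb R}e^{it\lambda}\,d\langle\Phi,E_\lambda\Phi\rangle$ and interchanging the $\lambda$- and $y$-integrations, the inner integral $\int_{\mathbb R}e^{it\lambda}\cos(\tfrac\lambda2\ln y)\,d\lambda$ equals $\pi(\delta(t+\tfrac12\ln y)+\delta(t-\tfrac12\ln y))$. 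For $t\neq0$ exactly one delta fires, at $y=e^{2|t|}\geq1$, and the associated Jacobian factor $2y$ collapses the $y$-integral to $\sqrt2\,e^{|t|}(1+e^{4|t|})^{-1/2}=\sqrt{2e^{2t}/(1+e^{4t})}=({\rm sech}\,2t)^{1/2}$. As an independent check, one may note that $T=-i(1+2x\tfrac{d}{dx})$ integrates to the dilation group $e^{itT}\Phi(x)=e^{t}\Phi(e^{2t}x)$, whose Gaussian overlap $\frac{e^t}{\sqrt\pi}\int_{\mathbb R}e^{-(1+e^{4t})x^2/2}\,dx$ reproduces $({\rm sech}\,2t)^{1/2}$ at once.
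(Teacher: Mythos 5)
Your argument is correct and arrives at both formulas, but it evaluates the key integrals by a genuinely different (and more self-contained) route than the paper. Both proofs begin the same way: Lemma \ref{fgs}, Theorem \ref{R1} and Corollary \ref{even} reduce everything to the double integral $\int_0^\infty\int_s^\infty e^{-(s^2+w^2)/2}\,s^{-\mu-1}w^{\mu}\,dw\,ds$ with $\mu=\frac{a-i}{2i}$. The paper hands this to Mathematica, which returns a combination of ${}_2F_1$'s, and then applies the identity ${}_2F_1(a,b;a+1;z)=a z^{-a}B(z;a,1-b)$ to reach the incomplete Beta form; your substitutions $w=sy$ and then $\omega=y^{-2}$, followed by termwise integration of $(1+\omega)^{-1/2}$, produce the Beta series directly and make the branch choice $(-1)^{\mu/2}=e^{i\pi\mu/2}$ and the conjugation identity $\overline{B(-1;z,\tfrac12)}=e^{-2\pi i\bar z}B(-1;\bar z,\tfrac12)$ explicit rather than buried in a CAS output. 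I have checked the constants: $\langle\Phi,R(t-i0;T)\Phi\rangle=\frac{1+i}{4}e^{\pi t/4}B(-1;\tfrac{1+it}{4},\tfrac12)$ and the final prefactor $\frac{1-i}{8\pi}$ are right. For the characteristic function the two routes diverge more sharply: the paper expands the Beta functions into a series in $\frac{1}{(4n+1)^2+\lambda^2}$ and sums term by term via Fourier-transform tables, whereas you keep the density as $\frac{1}{\pi\sqrt2}\int_1^\infty y^{-1/2}(1+y^2)^{-1/2}\cos(\tfrac{\lambda}{2}\ln y)\,dy$ and let the $\lambda$-integration produce deltas that collapse the $y$-integral at $y=e^{2|t|}$; this is shorter, and your dilation-group identity $e^{itT}\Phi(x)=e^{t}\Phi(e^{2t}x)$ plays the role of the paper's independent algebraic (Stirling-number) derivation of $\sqrt{\mathrm{sech}\,2t}$. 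Two small corrections: the single integral $\int_1^\infty y^{\mu}(1+y^2)^{-1/2}\,dy$ converges when $\mathrm{Re}\,\mu<0$, i.e.\ $\mathrm{Im}\,a<1$, not ``precisely when $\mathrm{Im}\,a>-1$'' (harmless here, since $a=t-i\epsilon$ satisfies both); and the identification after the $\omega$-substitution should carry the factor $\tfrac12$, reading $\tfrac12(-1)^{\mu/2}B(-1;-\mu/2,\tfrac12)$ --- your final constants show you kept it, but the sentence as written drops it.
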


\begin{proof} We first present a proof without Stone's formula:
\[
\lbrack P, X\rbrack=-i\mathbf{1}\implies PX=XP-i\mathbf{1}\ ,
\]
so
\begin{align*}
\langle \Phi, e^{i t (XP+PX)} \Phi \rangle&=\langle \Phi, e^{i t
(2XP-i\mathbf{1})} \Phi \rangle=e^t \langle \Phi, e^{2i t XP} \Phi \rangle\\
&=e^t \sum_{n=0}^\infty \frac{(2it)^n}{n!}\langle \Phi, (XP)^n
\Phi \rangle \  .
\end{align*}
Using Lemma 8.4 of \cite{BFspec},  trivially extended to $n=0$ and
$\sum_{n=0}^\infty$ with the use of $S(0,0)=1$, we have
\begin{align*}
\langle \Phi, e^{i t (XP+PX)} \Phi \rangle=&e^t \sum_{n=0}^\infty
\frac{(2it)^n}{n!}\sum_{k=0}^n
(-1)^{n-k}i^{n-k} S(n,k) \langle \Phi, X^kP^k \Phi \rangle\\
=&e^t \sum_{n=0}^\infty \frac{(2it)^n}{n!}\sum_{k=0}^\infty
(-1)^{n-k}i^{n-k} S(n,k) \langle X^k\Phi, P^k \Phi \rangle\ ,
\end{align*}
where we have used the fact that $X$ is self-adjoint and the
\textit{Stirling numbers of the second kind} $S(n,k)$  satisfy
$S(n,k)=0$ for $k>n$. As in the proof of Theorem 8.5 of
\cite{BFspec},
\[
\langle X^k\Phi, P^k \Phi \rangle=\frac{1}{\pi
\sqrt{2}}\int_{\mathbb{R}}\int_{\mathbb{R}}\lambda^k \mu^k e^{
-\frac{\lambda^2+\mu^2}{2}+i\lambda \mu}\,d\lambda\,d\mu \  .
\]
Thus, switching the order of summation,
\begin{align*}
\langle \Phi, e^{i t (XP+PX)} \Phi \rangle=&\frac{e^t}{\pi
\sqrt{2}}\int_{\mathbb{R}}\int_{\mathbb{R}}\sum_{k=0}^\infty i^k
\left(\sum_{n=0}^\infty\frac{(2t)^n}{n!}S(n,k)\right)\lambda^k
\mu^k e^{ -\frac{\lambda^2+\mu^2}{2}+i\lambda \mu}\,d\lambda\,d\mu
\\
=&\frac{e^t}{\pi \sqrt{2}}\int_{\mathbb{R}}\int_{\mathbb{R}}e^{i
\lambda \mu (e^{2t}-1)}e^{ -\frac{\lambda^2+\mu^2}{2}+i\lambda
\mu}\,d\lambda\,d\mu\\
=&\frac{e^t}{\pi \sqrt{2}}\int_{\mathbb{R}}\int_{\mathbb{R}}e^{
-\frac{\lambda^2+\mu^2}{2}+i\lambda \mu e^{2t}}\,d\lambda\,d\mu\ ,
\end{align*}
where we have used the identities
\[
\sum_{n=0}^\infty\frac{(2t)^n}{n!}S(n,k)=\frac{(e^{2t}-1)^k }{k!}\
,
\]
and
\[
\sum_{k=0}^\infty\frac{ \left(  i  (e^{2t}-1) \lambda \mu
\right)^k  }{k!}=e^{i \lambda \mu (e^{2t}-1)} \  .
\]
Thus, using the integration formula
\[
\int_{\mathbb{R}}\int_{\mathbb{R}} e^{\alpha x^2+\beta y^2+i\gamma
x y}\, dx dy=\frac{2\pi}{\sqrt{\gamma^2+4 \alpha \beta}}\,\,,{\rm
Re}\, \beta<0\,\,,\,\, {\rm Re}\left(
4\alpha+\frac{\gamma^2}{\beta} \right)<0 \ ,
\]
we obtain
\[
\langle \Phi, e^{i t (XP+PX)} \Phi \rangle=\frac{e^t}{\pi
\sqrt{2}}\frac{2\pi}{\sqrt{e^{4t}+1}}=
\sqrt{\frac{2}{e^{2t}+e^{-2t}} }=\sqrt{\rm{sech}\, 2t}\ .
\]

The vacuum spectral resolution can be found using the inverse
Fourier transform (see Theorem 9.5 of \cite{BFspec}).

\medskip

 We can derive the formula for the vacuum spectral
resolution by a direct application of Stone's formula as follows:

\medskip

By Lemma \ref{fgs} and  Theorem \ref{R1},  since, by Corollary
\ref{even}, ${\rm Im }\left(\Phi(s)R(t-\epsilon i;
T)\Phi(s)\right)$ is an even function of $s$, we have
\begin{align*}
\langle \Phi, E_\lambda \Phi\rangle &=\frac{1}{\pi }
\lim_{\epsilon\to 0^+}\int_{-\infty}^{\lambda}{\rm Im }\langle
\Phi,R(t-\epsilon i;
T)\Phi\rangle\,dt\\
=&\frac{1}{\pi }  \lim_{\epsilon\to 0^+}\int_{-\infty}^{\lambda}
\int_{-\infty}^{\infty }{\rm Im }\left(\Phi(s)R(t-\epsilon i;
T)\Phi(s)\right)\,ds \,dt   \\
=&\frac{2}{\pi }  \lim_{\epsilon\to 0^+}\int_{-\infty}^{\lambda}
\int_{0}^{\infty }{\rm Im }\left(\Phi(s)R(t-\epsilon i;
T)\Phi(s)\right)\,ds \,dt   \\
=&\frac{1}{{\pi}^{3/2} } \lim_{\epsilon\to 0^+}
\int_{-\infty}^{\lambda} \int_{0}^{\infty }{\rm Im }\left( i
e^{-\frac{s^2}{2}} s^{ \frac{it+\epsilon-1}{2} }\int_s^\infty w^{
\frac{-it-\epsilon-1}{2}
}e^{-\frac{w^2}{2}}\,dw \right)    \,ds \,dt   \\
=&\frac{1}{{\pi}^{3/2} } \lim_{\epsilon\to
0^+}\int_{-\infty}^{\lambda} \int_{0}^{\infty } {\rm Re
}\left(e^{-\frac{s^2}{2}} s^{ \frac{it+\epsilon-1}{2}
}\int_s^\infty w^{ \frac{-it-\epsilon-1}{2}
}e^{-\frac{w^2}{2}}\,dw\right) \,ds
\,dt\\
=& \frac{1}{{\pi}^{3/2} } \lim_{\epsilon\to
0^+}\int_{-\infty}^{\lambda} \int_{0}^{\infty }\int_s^\infty
e^{-\frac{s^2+w^2}{2}}  {\rm Re }\left(s^{ \frac{it+\epsilon-1}{2}
}w^{ \frac{-it-\epsilon-1}{2} }\right)\,dw \,ds \,dt \ .
\end{align*}
 Since, for $s>0$,
\begin{align*}
 s^{ \frac{it+\epsilon-1}{2} }&=s^{ \frac{\epsilon-1}{2} }\left( \cos\left(\frac{t}{2} \ln s  \right)
 +i \sin\left( \frac{t}{2} \ln s \right)
 \right)\  ,\\
 w^{ \frac{-it-\epsilon-1}{2} }&=w^{ -\frac{\epsilon+1}{2} }\left( \cos\left(\frac{t}{2} \ln w  \right)
 -i \sin\left( \frac{t}{2} \ln w \right)  \right)\  ,
\end{align*}
we see that
\[
{\rm Re }\left(s^{ \frac{it+\epsilon-1}{2} } w^{
\frac{-it-\epsilon-1}{2} }\right)=s^{ \frac{\epsilon-1}{2} } w^{-
\frac{\epsilon+1}{2} }\cos\left(\frac{t}{2} \ln \frac{s}{w}
\right) \ .
\]
Mathematica computes
\begin{align*}
 \int_{0}^{\infty }\int_s^\infty
e^{-\frac{s^2+w^2}{2}}s^{ \frac{\epsilon-1}{2} } w^{-
\frac{\epsilon+1}{2} }   &  \cos\left(\frac{t}{2} \ln \frac{s}{w}
\right)\,dw \,ds =\left(
\frac{\pi}{2}\right)^{1/2}\\
&\cdot\left(\frac{1}{1-it+\epsilon}\,\,
{}_{2}F_1\left(\frac{1}{2}, \frac{1-it+\epsilon}{4};
\frac{5-it+\epsilon}{4};-1\right)\right.\\
&\,\,\,\,\,\left.+\frac{1}{1+it+\epsilon}\,\,
{}_{2}F_1\left(\frac{1}{2}, \frac{1+it+\epsilon}{4};
\frac{5+it+\epsilon}{4};-1\right)\right) \  ,
\end{align*}
so
\begin{align*}
&\lim_{\epsilon\to 0^+} \int_{0}^{\infty }\int_s^\infty
e^{-\frac{s^2+w^2}{2}}s^{ \frac{\epsilon-1}{2} } w^{-
\frac{\epsilon+1}{2} }   \cos\left(\frac{t}{2} \ln \frac{s}{w}
\right)\,dw \,ds =\left( \frac{\pi}{2}\right)^{1/2}\\
&\cdot\left(\frac{1}{1-it}\,\, {}_{2}F_1\left(\frac{1}{2},
\frac{1-it}{4}; \frac{5-it}{4};-1\right)+\frac{1}{1+it}\,\,
{}_{2}F_1\left(\frac{1}{2}, \frac{1+it}{4};
\frac{5+it}{4};-1\right)\right) \ ,
\end{align*}
where
\[
{}_{2}F_1\left(a, b;  c; z\right)=\sum_{n=0}^\infty \frac{
(a)_n(b)_n }{(c)_n }\frac{z^n}{n!}\ ,
\]
is \textit{Gauss's  hypergeometric function}. Using the identity
\[
{}_{2}F_1\left(a, b;  a+1; z\right)=\frac{a}{z^a} B(z; a, 1-b)\ ,
\]
we obtain
\begin{align*}
\lim_{\epsilon\to 0^+}& \int_{0}^{\infty }\int_s^\infty
e^{-\frac{s^2+w^2}{2}}s^{ \frac{\epsilon-1}{2} } w^{-
\frac{\epsilon+1}{2} } \cos\left(\frac{t}{2} \ln \frac{s}{w}
\right)\,dw \,ds=\left(\frac{\pi}{2}\right)^{1/2}\frac{1}{4}\\
&\cdot \left((-1)^{-\frac{1-it}{4}}B\left(-1;
\frac{1}{4}\left(1-it\right), \frac{1}{2}\right)+
(-1)^{-\frac{1+it}{4}}\,\,B\left(-1; \frac{1+it}{4},
\frac{1}{2}\right) \right)\  ,
\end{align*}
which, since
\[
(-1)^{-\frac{1-it}{4}}=\frac{\sqrt{2}}{2}(1-i)e^{-\frac{t\pi}{4} }
\,,\,(-1)^{-\frac{1+it}{4}}=\frac{\sqrt{2}}{2}(1-i)e^{\frac{t\pi}{4}
}\ ,
\]
gives
\begin{align*}
 \lim_{\epsilon\to 0^+} \int_{0}^{\infty }\int_s^\infty &
e^{-\frac{s^2+w^2}{2}}s^{ \frac{\epsilon-1}{2} } w^{-
\frac{\epsilon+1}{2} } \cos\left(\frac{t}{2} \ln \frac{s}{w}
\right)\,dw \,ds= \frac{{\pi}^{1/2}(1-i)}{8}\\
&\cdot\left( e^{-\frac{t\pi}{4} } \,\, B\left(-1;
\frac{1}{4}\left(1-it\right), \frac{1}{2}\right)+
e^{\frac{t\pi}{4} }\,\,B\left(-1; \frac{1+it}{4},
\frac{1}{2}\right) \right)\  .
\end{align*}
Thus
\begin{align*}
\langle \Phi, E_\lambda \Phi\rangle &=\frac{1}{{\pi}^{3/2} }
\int_{-\infty}^{\lambda} \frac{{\pi}^{1/2}(1-i)}{8}\\
&\cdot\left( e^{-\frac{t\pi}{4} } \,\, B\left(-1;
\frac{1}{4}\left(1-it\right), \frac{1}{2}\right)+
e^{\frac{t\pi}{4} }\,\,B\left(-1; \frac{1+it}{4},
\frac{1}{2}\right) \right)    \,dt\\
&=\frac{1-i}{8\pi}\,\int_{-\infty}^\lambda \left(e^{-\frac{\pi
t}{4}}\,B\left(-1; \frac{1- i t}{4},\frac{1}{2}
\right)+e^{\frac{\pi t}{4}}\,B\left(-1; \frac{1+ i
t}{4},\frac{1}{2} \right)\right)\,dt \  .
\end{align*}
Therefore, for the vacuum characteristic function we have
\begin{align*}
\langle \Phi, e^{it(XP+PX)}\Phi
\rangle&=\int_{\mathbb{R}}e^{it\lambda}\,d\langle\Phi,
E_{\lambda}\Phi \rangle\\
=\frac{1-i}{8\pi}\,\int_{\mathbb{R}}e^{it\lambda}&\,\left(e^{-\frac{\pi
\lambda}{4}}\,B\left(-1; \frac{1- i \lambda}{4},\frac{1}{2}
\right)+e^{\frac{\pi \lambda}{4}}\,B\left(-1; \frac{1+ i
\lambda}{4},\frac{1}{2} \right)\right)\,d \lambda\ ,
\end{align*}
which, using the series representation of the incomplete Beta
function, yields
\begin{align*}
\langle \Phi, e^{it(XP+PX)}\Phi
\rangle&=\frac{2}{\pi^{1/2}}\,\sum_{n=0}^\infty
\frac{\left(\frac{1}{2}\right)_n (-1)^n (4n+1) }{n! }\\
&\cdot\left((2
\pi)^{-1/2}\,\int_{\mathbb{R}}\frac{e^{it\lambda}}{(4n+1)^2+\lambda^2}\,d
\lambda\right)\ .
\end{align*}
From Fourier transform tables, wee see that
\[
(2
\pi)^{-1/2}\,\int_{\mathbb{R}}\frac{e^{it\lambda}}{(4n+1)^2+\lambda^2}\,d
\lambda=\frac{1}{4n+1} \left(\frac{\pi}{2}\right)^{1/2}
e^{-(4n+1)|t|}.
\]
Thus
\begin{align*}
\langle \Phi, e^{it(XP+PX)}\Phi
\rangle&=\frac{1}{\sqrt{2}}\,\sum_{n=0}^\infty
\frac{\left(\frac{1}{2}\right)_n (-1)^n e^{-(4n+1)|t|} }{n! }\\
&=\frac{1}{\sqrt{2}}\,\frac{e^{-|t|}}{1+e^{-4|t|}}
=\sqrt{\frac{2}{e^{2|t|}+e^{-2|t|}}}\\
&=\sqrt{\frac{2}{e^{2t}+e^{-2t}}}=\sqrt{{\rm sech} \, 2t} \  .
\end{align*}
\end{proof}

\begin{remark} \rm
As pointed out in \cite{BFspec}, using
\[
X=\frac{a+a^{\dagger}}{\sqrt{2}}
\,\,,\,\,P=\frac{a-a^{\dagger}}{\sqrt{2}i}  \  ,
\]
where
\[
\lbrack a, a^{\dagger}\rbrack=\mathbf{1},
\]
we find that
\[
XP+PX=i\left( (a^{\dagger})^2-a^2  \right) \ ,
\]
so the result of   Theorem \ref{XP+PX} is in agreement with
Proposition 3.4 of \cite{AccBouCOSA} (see also Proposition 3.9 of
\cite{AccBouArxiv},  Proposition 4.1.1 of \cite{Fein} and
Proposition 4 of \cite{AccBouPhys}), and Proposition 9.5 of
\cite{BFspec}, where it is shown that $XP+PX$ is a continuous
binomial or Beta process. We remark also that, in the statement of
Proposition 9.5 of \cite{BFspec}, a factor of $2$ is missing.
\end{remark}

\section{The quantum harmonic oscillator Hamiltonian operator
$\frac{1}{2}(X^2+P^2)$}

\begin{theorem}\label{sa2} The operator $\frac{1}{2}(X^2+P^2)$ is symmetric on the Schwartz class $\mathcal{S}$
and admits a self-adjoint extension.
\end{theorem}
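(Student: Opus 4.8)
The plan is to mirror the structure of the proof of Theorem~\ref{sa1}. Set $T=\frac12(X^2+P^2)$ with $\mathrm{dom}(T)=\mathcal{S}$. Since $\mathcal{S}$ is a common invariant domain for $X$ and $P$, we have $X^2f,\,P^2f\in\mathcal{S}$ for every $f\in\mathcal{S}$, so $T$ is well defined there. To obtain symmetry I would move one factor at a time across the inner product, using that $X$ and $P$ are each self-adjoint:
\[
\langle X^2 f,g\rangle=\langle Xf,Xg\rangle=\langle f,X^2 g\rangle,\qquad
\langle P^2 f,g\rangle=\langle Pf,Pg\rangle=\langle f,P^2 g\rangle .
\]
Adding these and dividing by $2$ yields $\langle Tf,g\rangle=\langle f,Tg\rangle$ for all $f,g\in\mathcal{S}$, so $T$ is symmetric.

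For the self-adjoint extension I would again invoke the $K$-real criterion, Theorem~8.9 of \cite{weidmann}, but now with the \emph{ordinary} complex conjugation $Cf(x)=\overline{f(x)}$ in place of the reflecting conjugation $K$ used for the anti-commutator. The reason is that, with $\hslash=1$,
\[
Tf(x)=\tfrac12\bigl(x^2 f(x)-f''(x)\bigr),
\]
a second-order differential operator with \emph{real} coefficients. I would check that $C$ preserves $\mathcal{S}$ (conjugation clearly keeps a function smooth and rapidly decreasing) and that $[T,C]f=0$: conjugation commutes with multiplication by the real function $\tfrac12 x^2$, and, since $\overline{f''}=(\overline{f})''$, it commutes with $-\tfrac12\,d^2/dx^2$ as well. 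Being symmetric and $C$-real, $T$ then has equal deficiency indices and hence admits a self-adjoint extension. (An alternative route would be to note that the Hermite functions furnish a complete orthonormal set of eigenvectors lying in $\mathcal{S}$, giving essential self-adjointness directly; but to keep the argument parallel with Theorem~\ref{sa1} I would present the conjugation version.)

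The only genuinely delicate point is bookkeeping. Unlike the anti-commutator $XP+PX$, whose purely imaginary coefficient $-2ix\,d/dx$ forced the reflection $x\mapsto -x$ inside $K$, here the real-coefficient form of $T$ means that plain complex conjugation already commutes with it, so no sign-chasing through the chain rule is required. I expect the verification of $[T,C]=0$ and of the invariance $C\mathcal{S}\subseteq\mathcal{S}$ to be the main — though entirely routine — step, after which Theorem~8.9 of \cite{weidmann} closes the argument exactly as in Theorem~\ref{sa1}.
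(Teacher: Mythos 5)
Your proposal is correct and follows essentially the same route as the paper: the symmetry computation via $\langle X^2f,g\rangle=\langle Xf,Xg\rangle=\langle f,X^2g\rangle$ (and likewise for $P^2$) is identical, and your invocation of the $K$-real criterion of \cite{weidmann} with plain complex conjugation is the same von Neumann reality argument the paper uses (citing \cite{Yosida} instead), both resting on the key observation that $Tf=\tfrac12(x^2f-f'')$ has real coefficients.
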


\begin{proof} Let $T=\frac{1}{2}(X^2+P^2)$ and let $f,g
\in \mathcal{S}$. As in Theorem \ref{sa1}, since $Xf, Pf \in
\mathcal{S}$, $T$ is well defined on ${\rm dom}(T)=\mathcal{S}$.
Since $X, P$ are self-adjoint,
\begin{align*}
\langle Tf, g\rangle=&\frac{1}{2} \langle X^2f, g
\rangle+\frac{1}{2} \langle P^2f, g \rangle =\frac{1}{2}\langle
Xf, Xg \rangle+ \frac{1}{2}\langle Pf, Pg
\rangle \\
=&\frac{1}{2}\langle f, X^2g \rangle+\frac{1}{2}\langle f, P^2g
\rangle =\frac{1}{2}\langle f, \left(X^2+P^2\right)g \rangle
=\langle f, Tg \rangle\ ,
\end{align*}
so $T$ is symmetric. To show that $T$ admits a self-adjoint
extension, we notice that if $f$ is in the domain of $T$ then so
also is its complex conjugate $\bar{f}$. Moreover, if $f(x)$ is a
real function in the domain of  $T$ then
$Tf(x)=\frac{1}{2}\left(x^2 f(x)-f^{\prime\prime}(x) \right)$ is
also real. Thus, by a theorem due to von Neumann (see
\cite{Yosida}, Chapter XI, Section 7,  Theorem 1), $T$ admits a
self-adjoint extension.

\end{proof}

\begin{theorem}\label{RR1}The resolvent of
 $\frac{1}{2}\left(X^2+P^2\right)$ is
\begin{align*}
R&\left(a; \frac{1}{2}\left(X^2+P^2\right)\right)
g(s)=c_1(a)\,M_1(s\sqrt{2}; a)+c_2(a) \, M_2(s\sqrt{2}; a)\\
&+\int_{-\infty}^{s\sqrt{2}} g\left(\frac{w}{\sqrt{2}}\right)
\left( M_1(w; a) M_2(s\sqrt{2}; a)-M_1(s\sqrt{2}; a)M_2(w; a)
\right)\,dw\ .
\end{align*}
where $g\in \mathcal{S}$,  $c_1(a),c_2(a)\in \mathbb{C} $,
\[
M_1 (z; a)=e^{-\frac{z^2}{4}}
{}_1F_1\left(-\frac{a}{2}+\frac{1}{4}; \frac{1}{2}; \frac{z^2}{2}
\right)\,,\,M_2(z; a)= z e^{-\frac{z^2}{4}}
{}_1F_1\left(-\frac{a}{2}+\frac{3}{4}; \frac{3}{2}; \frac{z^2}{2}
\right)\ ,
\]
and
\[
{}_{1}F_1\left(a;  c; z\right)=\sum_{n=0}^\infty \frac{ (a)_n
}{(c)_n }\frac{z^n}{n!} \  ,
\]
is  \textit{Kummer's confluent hypergeometric function}.
\end{theorem}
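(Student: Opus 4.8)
The plan is to follow the same strategy as the proof of Theorem~\ref{R1}: write $G=R\left(a;\tfrac12(X^2+P^2)\right)g$, translate the resolvent equation into a differential equation for $G$, and solve it. Setting $T=\tfrac12(X^2+P^2)$ and recalling that $X^2G(s)=s^2G(s)$ and $P^2G(s)=-G''(s)$, the equivalence $R(a;T)g=G \iff (a-T)G=g$ becomes
\[
\tfrac12\,G''(s)+\left(a-\tfrac{s^2}{2}\right)G(s)=g(s),
\]
a second-order linear ODE with a non-constant coefficient, in contrast to the first-order equation encountered for $XP+PX$.

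First I would dispose of the homogeneous equation $\tfrac12\,G''+(a-s^2/2)G=0$. The substitution $z=s\sqrt2$, $H(z)=G(s)$, absorbs the factor $\tfrac12$ and brings it into the Weber (parabolic cylinder) form
\[
H''(z)+\left(a-\tfrac{z^2}{4}\right)H(z)=0 ,
\]
whose even and odd solutions are precisely the functions $M_1(z;a)$ and $M_2(z;a)$ of the statement. The essential verification is that $M_1,M_2$ really solve this equation: substituting $M_1(z;a)=e^{-z^2/4}\,{}_1F_1\!\left(\tfrac14-\tfrac a2;\tfrac12;\tfrac{z^2}{2}\right)$ (and similarly for $M_2$), setting $u=z^2/2$, and invoking the fact that $F={}_1F_1(\alpha;\gamma;u)$ satisfies Kummer's equation $uF''+(\gamma-u)F'-\alpha F=0$ collapses the Weber equation to an identity. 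This computation, although routine, is the main technical obstacle, since one must carefully carry the $e^{-z^2/4}$ prefactor and apply the chain rule through $u=z^2/2$; recognizing that the two Kummer solutions are the even/odd parabolic cylinder functions is the key insight that makes the rest mechanical.

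With two independent homogeneous solutions in hand, I would construct the particular solution by variation of parameters. Because the Weber equation has no first-derivative term, Abel's identity guarantees that the Wronskian $W(z)=M_1(z;a)M_2'(z;a)-M_1'(z;a)M_2(z;a)$ is constant; evaluating at $z=0$ and using parity (here $M_1$ is even with $M_1(0;a)=1$, while $M_2$ is odd with $M_2'(0;a)=1$) gives $W\equiv1$. The variation-of-parameters formula with unit Wronskian, rewritten in the variable $z=s\sqrt2$ with forcing term $g(w/\sqrt2)$, then produces exactly
\[
\int_{-\infty}^{s\sqrt2} g\!\left(\tfrac{w}{\sqrt2}\right)\bigl(M_1(w;a)M_2(s\sqrt2;a)-M_1(s\sqrt2;a)M_2(w;a)\bigr)\,dw .
\]

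Adding to this particular solution the general homogeneous part $c_1(a)M_1(s\sqrt2;a)+c_2(a)M_2(s\sqrt2;a)$ yields the asserted formula. The statement records the general solution of the ODE, with $c_1(a),c_2(a)\in\mathbb{C}$ left as parameters; in practice these are fixed by the requirement that $G=R(a;T)g$ lie in the domain of $T$, that is, that $G$ decay as $s\to\pm\infty$ and be square-integrable, which selects the recessive parabolic cylinder solution at each infinity. I would defer imposing this normalization to the point where the resolvent is actually applied (as in the subsequent computation of the vacuum characteristic function), so the proof itself needs only the derivation of the ODE, the identification of $M_1,M_2$, the constant-Wronskian evaluation, and the variation-of-parameters step just outlined.
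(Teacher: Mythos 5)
Your proposal is correct and follows essentially the same route as the paper: translate the resolvent equation into the non-homogeneous Weber ODE $G''-(s^2-2a)G=2g$, pass to canonical form via $z=s\sqrt2$, take $M_1,M_2$ as the homogeneous solutions, show the Wronskian is identically $1$ by Abel's identity plus evaluation at $0$, and assemble the general solution by variation of parameters. The only differences are elaborations (verifying the $M_i$ via Kummer's equation, which the paper handles by citation, and the remark on fixing $c_1,c_2$ by decay), not a change of method.
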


\begin{proof}

We notice that  for $s\in\mathbb{R}$,   with all derivatives taken
with respect to $s$,
\begin{align*}
R\left(a;\frac{1}{2}\left(X^2+P^2\right)\right)g(s)=G(s; a)& \iff g(s)=\left(a-\frac{1}{2}\left(X^2+P^2\right)\right)G(s; a)\\
& \iff g(s)=\left(a-\frac{1}{2}s^2\right) G(s; a)+\frac{1}{2} G^{\prime\prime} (s; a)    \\
& \iff G^{\prime\prime} (s; a)-\left(s^2-2a\right) G(s; a)= 2 g(s)
\ ,
\end{align*}
i.e. $G$ satisfies a \textit{non-homogeneous Weber differential
equation} \cite{web1,  par, zj}.

Letting $M(s; a):=G\left(\frac{s}{\sqrt{2}}; a  \right)$, we see
that $M$ satisfies  a non-homogeneous Weber differential equation
in \textit{canonical form} \cite{web1,  par},
\[
M^{\prime\prime} (s; a)-\left(\frac{1}{4}s^2-a\right) M(s; a)=
g\left(\frac{s}{\sqrt{2}}\right) \  .
\]
The general solution of the associated homogeneous differential
equation
\[
M^{\prime\prime} (s; a)-\left(\frac{1}{4}s^2-a\right) M(s; a)=0 \
,
\]
is, see \cite{zj} and \cite{web1},
\[
M (s; a)=c_1(a)\,M_1(s; a)+c_2(a) \, M_2(s; a) \  ,
\]
where, $c_1(a),c_2(a)\in \mathbb{C} $ and
\[
M_1 (s; a)=e^{-\frac{s^2}{4}}
{}_1F_1\left(-\frac{a}{2}+\frac{1}{4}; \frac{1}{2}; \frac{s^2}{2}
\right)\,,\,M_2(s; a)= s e^{-\frac{s^2}{4}}
{}_1F_1\left(-\frac{a}{2}+\frac{3}{4}; \frac{3}{2}; \frac{s^2}{2}
\right) \  .
\]
The Wronskian $W(M_1, M_2)(s)$ of $M_1$ and $M_2$ is identically
equal to $1$ (an easy way to show this is by showing that
$\frac{dW}{ds}(M_1, M_2)(s; a)=0$ and then computing $W(M_1,
M_2)(0; a)=1$). By the well-known \textit{variation of parameters
formula}, a  solution of the non-homogeneous Weber differential
equation is
\[
M_p(s; a)=\int_{-\infty}^s g\left(\frac{w}{\sqrt{2}}\right) \left(
M_1(w; a) M_2(s; a)-M_1(s; a) M_2(w; a) \right)\,dw \  .
\]
Thus, the general solution of the non-homogeneous Weber
differential equation, in canonical form, is
\begin{align*}
M(s; a)&=c_1(a)\,M_1(s; a)+c_2(a) \, M_2(s; a)\\
&+\int_{-\infty}^s g\left(\frac{w}{\sqrt{2}}\right) \left( M_1(w;
a) M_2(s; a)-M_1(s; a)M_2(w; a) \right)\,dw\  .
\end{align*}
 Thus,
\begin{align*}
G(s; a)&=M(s\sqrt{2}; a)=c_1(a)\,M_1(s\sqrt{2}; a)+c_2(a) \, M_2(s\sqrt{2}; a)\\
&+\int_{-\infty}^{s\sqrt{2}} g\left(\frac{w}{\sqrt{2}}\right)
\left( M_1(w; a) M_2(s\sqrt{2}; a)-M_1(s\sqrt{2}; a)M_2(w; a)
\right)\,dw\ ,
\end{align*}
where,
\begin{align*}
M_1 (s\sqrt{2}; a)&=e^{-\frac{s^2}{2}}
{}_1F_1\left(-\frac{a}{2}+\frac{1}{4}; \frac{1}{2}; s^2
\right)\  ,\\
M_2(s\sqrt{2}; a)&= s\sqrt{2} e^{-\frac{s^2}{2}}
{}_1F_1\left(-\frac{a}{2}+\frac{3}{4}; \frac{3}{2}; s^2 \right) \
.
\end{align*}
\end{proof}

\begin{theorem}\label{webercf} The vacuum spectral resolution of
  $\frac{1}{2}\left(X^2+P^2\right)$ is
\[
\langle \Phi, E_\lambda \Phi\rangle=H_{1/2}(\lambda) \ .
\]
 Moreover, for $t\in
\mathbb{R}$,
\[
\langle \Phi, e^{i t \frac{1}{2}\left(X^2+P^2\right)} \Phi
\rangle=e^{\frac{it}{2}} \ ,
\]
i.e., the probability distribution of
 $\frac{1}{2}\left(X^2+P^2\right)$ is degenerate with pdf $\delta_\frac{1}{2}$ .
\end{theorem}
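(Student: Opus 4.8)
The plan is to exploit the fact that the vacuum $\Phi$ is itself the ground-state eigenfunction of $T=\frac{1}{2}(X^2+P^2)$. As noted in the proof of Theorem \ref{sa2}, $T$ acts as $f\mapsto\frac{1}{2}(x^2 f-f'')$, and a one-line computation gives $\Phi''=(x^2-1)\Phi$, so that $T\Phi=\frac{1}{2}(x^2\Phi-\Phi'')=\frac{1}{2}\Phi$ with $\|\Phi\|=1$. From this the characteristic function is immediate: $e^{itT}\Phi=e^{it/2}\Phi$, whence $\langle\Phi,e^{itT}\Phi\rangle=e^{it/2}\langle\Phi,\Phi\rangle=e^{it/2}$. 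The degeneracy of the distribution, with pdf $\delta_{1/2}$, will then follow once the spectral resolution is identified.

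To obtain $\langle\Phi,E_\lambda\Phi\rangle$ in the spirit of the paper, I would feed this eigenrelation into Stone's formula rather than grinding through the full resolvent of Theorem \ref{RR1}. Since $(a-T)\Phi=(a-\frac{1}{2})\Phi$, we have $R(a;T)\Phi=(a-\frac{1}{2})^{-1}\Phi$, so that $\langle\Phi,R(t-\epsilon i;T)\Phi\rangle=(t-\frac{1}{2}-\epsilon i)^{-1}$ and hence
\[
{\rm Im }\langle\Phi,R(t-\epsilon i;T)\Phi\rangle=\frac{\epsilon}{(t-\frac{1}{2})^2+\epsilon^2}.
\]
By Lemma \ref{fgs} this is, up to the factor $1/\pi$, a Poisson kernel centred at $\frac{1}{2}$; integrating it over $(-\infty,\lambda)$ gives $\frac{1}{\pi}\arctan((\lambda-\frac{1}{2})/\epsilon)+\frac{1}{2}$, which tends to $H_{1/2}(\lambda)$ as $\epsilon\to 0^+$. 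Differentiating yields $d\langle\Phi,E_\lambda\Phi\rangle=\delta_{1/2}$, and the characteristic function $\int_{\mathbb{R}}e^{it\lambda}\delta_{1/2}(\lambda)\,d\lambda=e^{it/2}$ recovers the formula above.

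The genuinely delicate work is precisely what this shortcut sidesteps. If one insisted on computing $\langle\Phi,E_\lambda\Phi\rangle$ from the explicit resolvent of Theorem \ref{RR1}, the hard part would be pinning down the coefficients $c_1(a),c_2(a)$ by demanding that $R(a;T)g$ lie in $L^2(\mathbb{R})$: the Weber (parabolic-cylinder) solutions $M_1,M_2$ grow like $e^{s^2/4}$ at $\pm\infty$, so only a distinguished decaying combination is admissible, and this combination acquires simple poles at the oscillator eigenvalues $a=n+\frac{1}{2}$ through Gamma-function factors. Extracting the vacuum residue at $a=\frac{1}{2}$ and checking that all higher poles are annihilated by $\Phi$ would reproduce $(a-\frac{1}{2})^{-1}$, but only after substantial special-function bookkeeping. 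The eigenrelation collapses all of this; the only remaining subtlety is the standard one that Stone's formula returns the midpoint value $\frac{1}{2}$ at the jump $\lambda=\frac{1}{2}$, a convention on a single point that does not affect the spectral measure $\delta_{1/2}$.
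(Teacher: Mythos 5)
Your proof is correct, but it takes a genuinely different route from the paper. You observe that $\Phi$ is the ground state, $T\Phi=\tfrac{1}{2}\Phi$ (and the computation $\Phi''=(x^2-1)\Phi$ checks out), so $\langle\Phi,R(t-\epsilon i;T)\Phi\rangle=(t-\tfrac{1}{2}-\epsilon i)^{-1}$ and Lemma \ref{fgs} reduces to integrating a Poisson kernel centred at $\tfrac{1}{2}$, giving $H_{1/2}(\lambda)$ directly; the characteristic function $e^{it/2}$ then follows either from the spectral measure or immediately from $e^{itT}\Phi=e^{it/2}\Phi$. The paper instead insists on running the general machinery: it takes the resolvent of Theorem \ref{RR1} (the non-homogeneous Weber equation), splits $G=K+iL$ into real and imaginary parts, lets $\epsilon\to 0$ to decouple the system so that $L$ solves the homogeneous Weber equation, integrates $\Phi L$ using Gaussian moments to get $c_1(t)\sum_n(\tfrac{1}{4}-\tfrac{t}{2})_n/n!$, and then fixes the free coefficient by the normalization $\langle\Phi,E_\infty\Phi\rangle=1$, interpreting $c_1(t)$ distributionally as $\pi^{3/4}\delta_{1/2}(t)$. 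Your shortcut is more elementary and arguably more rigorous --- the paper's distributional reading of $c_1(t)$, and your alternative of pinning down $c_1,c_2$ via $L^2$ decay of the parabolic-cylinder solutions, are both more delicate than the one-line eigenrelation --- but it only works because $\Phi$ happens to be an eigenvector of this particular observable; the paper's computation is the one that illustrates the Stone's-formula method it is advertising, which is the stated purpose of the article. Your closing remark about the midpoint value at the jump is the right caveat and does not affect the spectral measure.
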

\begin{proof}  For $\Phi(s)=\pi^{-1/4}\,e^{-\frac{s^2}{2}}$,
\begin{align*}
\langle \Phi, & E_\lambda \Phi\rangle =\frac{1}{\pi }
\lim_{\epsilon\to 0^+}\int_{-\infty}^{\lambda}{\rm Im }\langle
\Phi,R\left(t-\epsilon i;
\frac{1}{2}\left(X^2+P^2\right)\right)\Phi\rangle\,dt\\
&=\frac{1}{\pi }  \lim_{\epsilon\to
0^+}\int_{-\infty}^{\lambda}{\rm Im }\left( \int_{-\infty}^{\infty
}\Phi(s)R\left(t-\epsilon i;
\frac{1}{2}\left(X^2+P^2\right)\right)\Phi(s)\,ds \right)\,dt \ .
\end{align*}
For $a=t-i\epsilon$, in the notation of Theorem \ref{RR1}, we
have,
\[
G(s;
t-i\epsilon)=R\left(t-i\epsilon;\frac{1}{2}\left(X^2+P^2\right)\right)\Phi(s)
\ ,
\]
where
\[
G^{\prime\prime} (s; t-i\epsilon)-\left(s^2-2(t-i\epsilon)\right)
G(s; t-i\epsilon)=2 \pi^{-1/4}\,e^{-\frac{s^2}{2}} \  .
\]
Letting
\begin{align*}
G(s; t-i\epsilon)&=K(s; t-i\epsilon)+iL(s; t-i\epsilon)\ ,\\
K(s; t-i\epsilon)&={\rm Re }\,G(s; t-i\epsilon)\ , \\
L(s; t-i\epsilon)&={\rm Im }\,G(s; t-i\epsilon)\ ,
\end{align*}
we find that the real-valued functions $K, L$ satisfy the system
of ODE's:
\begin{align*}
K^{\prime\prime} (s; t-i\epsilon)-\left(s^2-2t\right) K(s;
t-i\epsilon)+2\epsilon L &=
2\pi^{-1/4}\,e^{-\frac{s^2}{2}} \  ,\\
L^{\prime\prime} (s; t-i\epsilon)-\left(s^2-2t\right) L(s;
t-i\epsilon)-2 \epsilon K&= 0 \ .
\end{align*}
For $\epsilon\to 0$ the system uncouples into,
\begin{align*}
K^{\prime\prime} (s; t)-\left(s^2-2 t\right) K(s; t)&=
2\pi^{-1/4}\,e^{-\frac{s^2}{2}} \  ,\\
L^{\prime\prime} (s; t)-\left(s^2-2 t\right) L(s; t)&= 0 \ .
\end{align*}
 The equation satisfied by $L$ is a \textit{homogeneous Weber
differential equation} \cite{web1, zj}. Thus,  in the notation of
Theorem \ref{RR1},
\[
L(s; t)=c_1(t)\,M_1(s\sqrt{2}; t)+c_2(t) \, M_2(s\sqrt{2}; t) \ ,
\]
where, $c_1(t), c_2(t)\in\mathbb{R}$. Therefore,
\[
\langle \Phi,  E_\lambda \Phi\rangle =\frac{1}{\pi }
\int_{-\infty}^{\lambda}\int_{-\infty}^{\infty } \Phi(s) L(s;
t)\,ds \,dt \  .
\]
Using,
\[
\int_{\mathbb{R}}s^{2n}e^{-s^2}\,
ds=\Gamma\left(n+\frac{1}{2}\right)\,,\,\int_{\mathbb{R}}s^{2n+1}e^{-s^2}\,
ds=0\,,\,n\in\{0, 1, 2,...\}
 \ ,
\]
we obtain
\[
\langle \Phi,  E_\lambda \Phi\rangle =\frac{1}{\pi^{3/4} }
\int_{-\infty}^{\lambda}c_1(t) \sum_{n=0}^\infty \frac{
\left(\frac{1}{4}-\frac{t}{2} \right)_n }{n!} \,dt \ .
\]
To determine $c_1(t)$ we will use the fact that
\[
\langle \Phi,  E_\infty \Phi\rangle =1 \  .
\]
The partial sums of the series
\[
 \sum_{n=0}^\infty \frac{
\left(x \right)_n }{n!}
\]
are
\[
s_k=\sum_{n=0}^k \frac{ \left(x \right)_n }{n!}=\frac{(1+k)
\Gamma(1+x+k)}{ \Gamma (x+1) \Gamma (2+k)} \ .
\]
Thus,
\[
\sum_{n=0}^\infty \frac{ \left(x \right)_n
}{n!}=\lim_{k\to\infty}s_k=\left\{
\begin{array}{llr}
  0\   ,&  x< 0   \\
& \\
1\ , & x=0\\
& \\
 \infty \   ,&  x> 0
 \end{array}
 \right.
\  .
\]
Thus, in order for
\[
\langle \Phi,  E_\infty \Phi\rangle =1 \  ,
\]
we must interpret $c_1(t)$ in the distribution sense as,
\[
c_1(t)=\pi^{3/4} \delta_{1/2}\left(t \right)\  ,
\]
in which case,
\[
\langle \Phi,  E_\lambda \Phi\rangle =
\int_{-\infty}^{\lambda}\delta_{1/2}\left(t\right)
\sum_{n=0}^\infty \frac{ \left(\frac{1}{4}-\frac{t}{2} \right)_n
}{n!} \,dt =\left\{
\begin{array}{llr}
  1\   ,&  \lambda \geq \frac{1}{2}   \\
& \\
0\   ,&  \lambda < \frac{1}{2}
 \end{array}
 \right\}
=H_{1/2}(\lambda)\ .
\]

Thus,
\begin{align*}
 \langle \Phi, e^{i t \frac{1}{2}\left(X^2+P^2\right)} \Phi
\rangle&=\int_{\mathbb{R}}e^{it\lambda}\,d\langle\Phi,
E_{\lambda}\Phi
\rangle\\
&=\int_{\mathbb{R}}e^{it\lambda}\,dH_{1/2}(\lambda)=\int_{\mathbb{R}}e^{it\lambda}\,\delta_{1/2}\left(\lambda
\right)\,d\lambda=e^{\frac{it}{2}}\  ,
 \end{align*}
 meaning that the probability distribution of
 $\frac{1}{2}\left(X^2+P^2\right)$ is degenerate.
\end{proof}
\begin{remark} \rm In terms of the creation and annihilation operators
$a^{\dagger}$ and $a$, respectively, where
\[
\lbrack a, a^{\dagger}\rbrack=\mathbf{1} \  ,
\]
using
\[
X=\frac{a+a^{\dagger}}{\sqrt{2}}
\,\,,\,\,P=\frac{a-a^{\dagger}}{\sqrt{2}i} \  ,
\]
we find that
\[
\frac{1}{2}\left(X^2+P^2\right)=\frac{1}{2}+a^{\dagger}a \ .
\]
The formula for the characteristic function given in Theorem
\ref{webercf} is precisely a special case of Remark 3.7 of
Proposition 3.4 of \cite{AccBouCOSA}, where
$\frac{1}{2}+a^{\dagger}a$ is denoted by $S^1_1$.
\end{remark}

\section{Stone's Formula in the finite dimensional case}

To illustrate the use of Stone's formula in the finite dimensional
case, we consider the \textit{Heisenberg observable}
\[
H=\begin{pmatrix}0&1&1\\
1&0&1\\
1&1&0
\end{pmatrix} \ ,
\]
of Section 6 of \cite{BFspec}. The eigenvalues of $H$ are:
$\lambda_1=2$, with multiplicity one,  and $\lambda_2=-1$ with
multiplicity two. It was shown in  \cite{BFspec} that, if
$\Phi=(a,b, c)$ is a unit vector in $\mathbb{R}^3$ then, for
$t\in\mathbb{R}$:
\[
\langle \Phi, e^{itH}\Phi
\rangle=\left(1-\frac{(a+b+c)^2}{3}\right) e^{-i
t}+\frac{(a+b+c)^2}{3}e^{2 i t},
\]
i.e., $H$ follows a Bernoulli distribution with probability
density function
\begin{align*}
p_{a,b,c}(\lambda)&=\left(1-\frac{(a+b+c)^2}{3}\right)
\delta_{-1}( \lambda)+\frac{(a+b+c)^2}{3}\delta_2( \lambda).
\end{align*}

We will show how the same result can be obtained with the use of
Stone's formula.

For $a\in\mathbb{C}\setminus\{-1, 2\}$ we have
\[
R\left(a; H\right)=\frac{1}{(a+1)(a-2)}\begin{pmatrix} a-1&1&1\\
1&a-1&1\\
1&1&a-1
\end{pmatrix} \ ,
\]
and
\begin{align*}
R(t-i\epsilon; &H)-R(t+i\epsilon; H)=\frac{2
i \epsilon}{((t-2)^2+\epsilon^2)((t+1)^2+\epsilon^2  )}\\
&\cdot\begin{pmatrix}
 3+t(t-2)+\epsilon^2&2t-1&2t-1\\
2t-1&3+t(t-2)+\epsilon^2&2t-1\\
2t-1&2t-1&3+t(t-2)+\epsilon^2
\end{pmatrix} \ .
\end{align*}

For a unit vector $\Phi=(a,b, c)$  in $\mathbb{R}^3$, we find,
\begin{align*}
\langle \Phi, &\left(R(t-i\epsilon; H)-R(t+i\epsilon;
H)\right)\Phi \rangle=\frac{2
i \epsilon}{((t-2)^2+\epsilon^2)((t+1)^2+\epsilon^2  )}\\
&\cdot \left( (t-2)^2+\epsilon^2+(2t-1)(a+b+c)^2 \right)\ ,
\end{align*}
and
\begin{align*}
\frac{1}{2\pi i }\int \langle \Phi,&(R(t-i\epsilon;
H)-R(t+i\epsilon; H))\Phi \rangle \,dt\\
&=\frac{1}{3\pi}\left( x^2 \arctan \left( \frac{t-2}{\epsilon}
\right)-(x^2-3)\arctan \left( \frac{t+1}{\epsilon} \right)
\right)\ ,
\end{align*}
where,
\[
x=a+b+c \ .
\]
Since,
\[
\lim_{y\to -\infty}  \frac{1}{3\pi}\left( x^2 \arctan \left(
\frac{y-2}{\epsilon} \right)-(x^2-3)\arctan \left(
\frac{y+1}{\epsilon} \right) \right) =-\frac{1}{2} \ ,
\]
we find that
\begin{align*}
\frac{1}{2\pi i }\int_{-\infty}^\lambda \langle
\Phi,&(R(t-i\epsilon;
H)-R(t+i\epsilon; H))\Phi \rangle\,dt\\
&=\frac{1}{3\pi}\left( x^2 \arctan \left(
\frac{\lambda-2}{\epsilon} \right)-(x^2-3)\arctan \left(
\frac{\lambda+1}{\epsilon} \right) \right)+\frac{1}{2}\ .
\end{align*}
Since,
\begin{align*}
\lim_{\epsilon\to 0^+} &\frac{1}{3\pi}\left( x^2 \arctan \left(
\frac{\lambda-2}{\epsilon} \right)-(x^2-3)\arctan \left(
\frac{\lambda+1}{\epsilon} \right) \right)\\
&=\left\{
\begin{array}{llr}
  \frac{1}{2}\   ,&  \lambda >2  \\
& \\
-\frac{x^2}{3}+\frac{1}{2}\   ,& -1< \lambda < 2 \\
& \\
-\frac{1}{2}\   ,& \lambda < -1
 \end{array}
 \right.
\ ,
\end{align*}
we have,
\[
\langle \Phi,  E_\lambda \Phi\rangle =\left\{
\begin{array}{llr}
  1\   ,&  \lambda >2  \\
& \\
-\frac{x^2}{3}+1\   ,& -1< \lambda < 2 \\
& \\
0\   ,& \lambda < -1
 \end{array}
 \right.
\ .
\]
Using the right-continuity of the spectral resolution, we extend
to $\lambda=2$ and $\lambda=-1$, to obtain the vacuum resolution
of the identity,
\[
\langle \Phi,  E_\lambda \Phi\rangle =\left\{
\begin{array}{llr}
 1\   ,&  \lambda \geq 2  \\
& \\
-\frac{x^2}{3}+1\   ,& -1\leq \lambda < 2 \\
& \\
0\   ,& \lambda < -1
 \end{array}
 \right\}
=\left(-\frac{x^2}{3}+1\right)H_{-1}(\lambda)+\frac{x^2}{3}
H_2(\lambda) \ .
\]
Therefore,
\[
d\,\langle \Phi,  E_\lambda \Phi\rangle =
\left(\left(-\frac{x^2}{3}+1\right)\delta_{-1}(\lambda)+\frac{x^2}{3}
\delta_2(\lambda)\right)\,d\lambda \ ,
\]
and
\begin{align*}
\langle \Phi, e^{itH}\Phi
\rangle&=\left(-\frac{x^2}{3}+1\right)\int_\mathbb{R}e^{it\lambda}\delta_{-1}(\lambda)\,d\lambda
 +\frac{x^2}{3}\int_\mathbb{R}e^{it\lambda}
 \delta_2(\lambda)\,d\lambda\\
&=\left(1-\frac{x^2}{3}\right)e^{-it}
 +\frac{x^2}{3}e^{2it} \  ,
\end{align*}
in agreement with the result obtained in Section 6 of
\cite{BFspec}.

\section{Errata}

The following corrections should me made to \cite{BFspec}:

\medskip

Formulas (8.5) and (8.7) are correct for positive $a, b$. For
general $a, b$, with $ab\neq 0$, they should be replaced,
respectively, by
\begin{align}
q(s)&= \frac{a}{2\sqrt{ab} }\tanh (2\sqrt{ab}\,s),   \\
r(s)&=\frac{b}{2\sqrt{ab} }\tanh (2\sqrt{ab}\,s) \  .
\end{align}

The formula in Theorem 8.5 is
\[
\langle \Phi, e^{it (a X^2+b P^2)}\Phi \rangle=
\frac{\sqrt{2}\,e^{\frac{1}{2}p(it)}}{\sqrt{e^{2 p(it)}+(2
\,q(it)-1)(2 \,r(it)-1)}}.
\]
and, as a result, the formula in Corollary 8.6 is
\[
\langle \Phi, e^{itH} \Phi \rangle= e^{\frac{it}{2}}
  \left( \frac{{\rm sech }\left(\frac{t\sqrt{3}}{2}\right)
}{1-\frac{i}{\sqrt{3}} \tanh \left(\frac{t\sqrt{3}}{2}\right)}
\right)^{1/2}
 \ .
\]

\end{document}